\newtheorem{theorem}{Theorem}
\newtheorem{definition}{Definition}
\newtheorem{remark}{Remark}
\newcommand{\bra}[1]{\mbox{$\left\langle #1 \right|$}}
\newcommand{\ket}[1]{\mbox{$\left| #1 \right\rangle$}}
\begin{document}

\title{Quantum key distribution with delayed privacy amplification 
and its application to security proof of a two-way deterministic protocol}

\author{Chi-Hang Fred Fung}
\email{chffung@hku.hk}
\affiliation{Department of Physics and Center of Computational and Theoretical
 Physics, University of Hong Kong, Pokfulam Road, Hong Kong}

\author{Xiongfeng Ma}
\email{xfma@iqc.ca}
\affiliation{%
Center for Quantum Information and Quantum Control,\\
Department of Physics and Department of Electrical \& Computer Engineering,\\
University of Toronto, Toronto,  Ontario, Canada}
\author{H.~F. Chau}
\email{hfchau@hku.hk}
\affiliation{Department of Physics and Center of Computational and Theoretical
 Physics, University of Hong Kong, Pokfulam Road, Hong Kong}
\author{Qing-yu Cai}
\email{qycai@wipm.ac.cn}
\affiliation{State Key Laboratory of
Magnetics Resonances and Atomic and Molecular Physics, Wuhan
Institute of Physics and Mathematics, Chinese Academy of Sciences,
Wuhan 430071, People's Republic of China}

\date{\today}

\begin{abstract}
Privacy amplification (PA) is an essential post-processing step in quantum key distribution (QKD) for removing any information an eavesdropper may have on the final secret key.
In this paper, we consider delaying PA of the final key after its use in one-time pad encryption and prove its security.
We prove that the security and the key generation rate are not affected by delaying PA.
Delaying PA has two applications: 
it serves as a tool for significantly simplifying the security proof of QKD with a two-way quantum channel, and also it is useful in QKD networks with trusted relays.
To illustrate the power of the delayed PA idea,
we use it to prove the security of a qubit-based two-way deterministic QKD protocol 
which uses four states and four encoding operations.
\end{abstract}

\pacs{03.67.Dd, 03.67.-a, 03.67.Hk, 03.67.Ac}

\maketitle

\section{Introduction}

Quantum key distribution (QKD)~\cite{Bennett1984,Ekert1991} allows two parties, Alice and Bob, to share a secret key by exchanging quantum particles.
The final secret key is secure against any eavesdropper, called Eve, with unlimited  computational power.
Initial security proofs of QKD mostly focus on infinite key size and perfect equipment~\cite{Mayers2001,Biham2000,Inamori2005,Lo1999,Shor2000,Kraus2005,Renner2005,Renner2005c}.
More recent security proofs take into consideration device imperfection~\cite{Gottesman2004,Koashi2009comple,Tsurumaru2008,Beaudry2008,Fung2009_effmismatch,Lydersen2010}; while
the effect of finite key size is explicitly considered in Refs.~\cite{PhysRevA.76.012329,Hayashi:Finite:06,Scarani_finite_08,Scarani:Finite:2008b,Cai_finite_2008,Fung2010_Finite,Ma2011_Finite}.

QKD protocols usually involve two post-processing steps after the quantum state transmission step: error correction (EC)~\cite{Shannon1948} to make sure Alice's key is the same as Bob's and privacy amplification (PA)~\cite{Bennett:1988:PAP:45474.45477,Bennett1995} to ensure Eve does not have any non-trivial information on the final secret key.
The final secret key generated can then be used in a subsequent cryptographic application such as the one-time pad (OTP)~\cite{Vernam1926,Shannon1949}.

In this paper, we consider running QKD without immediately running EC and PA.
Assuming that OTP will be used as the next step,
we delay the application of 
EC and PA
until
after the OTP, in effect performing a weakly secure OTP.
Delaying EC is trivial and requires no extra attention since errors in the original key simply translate into the same errors in the OTP-encrypted message, and bit errors do not affect the security of the message.
On the other hand, delaying PA is non-trivial since normal PA ensures a key becomes secure first before being used, and now we use the insecure key first before making it secure.
These two operations do not appear to be commuting, but we will prove that they do when we choose an appropriate PA scheme.
By commuting, we mean that
delayed PA is secure with the same security level achieved by the same PA function used to make the original raw key secure.
In summary, we prove that delaying PA after OTP does not affect the security and the key generation rate.
Delayed PA will be the focus of the paper.

At first glance, delaying privacy amplification does not appear to be of much use.
However, after a more thorough thought, we find that it is useful on at least two occasions.
First, it is useful in the secret key sharing between 
nodes in a QKD network where 
the nodes do not have a direct quantum link with each other 
but are separately connected to a common trusted relay.
QKD is run between each node and the intermediate trusted relay, without running the full QKD post-processing.
Some post-processing 
such as EC and PA
may be delayed\footnote{Investigation of the possibility of delaying basis reconciliation is beyond the scope of this paper.} until two nodes decide to share a key together, in which case, these post-processing steps are run only between them.
This is particularly useful when the classical communication, computation, and/or energy costs associated with the trusted relay are high, for example, as in satellite-based QKD~\cite{Ma_Trusted2011}.
Thus, delaying some costly post-processing parts 
can be beneficial.
In this paper, we do not discuss the trusted relay scenario 
but only the validity of delaying PA in a general manner.
Delaying EC is more trivial since any two parties each holding a bit string can remove errors between their strings by exchanging error syndromes with each other.

The second situation where delayed PA is useful is that we can use it to construct a two-way deterministic QKD protocol (DQKD) \cite{PhysRevLett.89.187902,PhysRevLett.90.157901,PhysRevLett.91.109801,PhysRevA.69.054301,Cai2004,PhysRevA.69.052319,PhysRevLett.94.140501,Lucamarini2007,Bostrom20083953}, whose security against general attacks is fully proved in this paper.
A two-way deterministic QKD protocol is a prepare-and-measure protocol in which 
each signal (in our case, a qubit) makes a round trip from Bob to Alice and back to Bob.
In contrast to conventional qubit-based QKD such as the BB84 protocol~\cite{Bennett1984}, the correct measurement basis is always used in DQKD because the signals are both prepared and measured by Bob.
To encode a key bit, Alice simply applies some operation (based on her key bit value) on the qubit sent by Bob and then returns it back to Bob.
Bob can decode Alice's key bit by a measurement in the same basis as what he used to prepare the initial qubit.
We remark that two-way deterministic QKD with continuous variables has been 
shown to have 
the potential for enhancing the security threshold~\cite{Pirandola2008}.
Delayed PA may serve as a tool for proving the security of two-way continuous-variable QKD.
In this paper, we focus on qubit-based two-way DQKD.

The security of qubit-based two-way DQKD had been a long-standing problem until our recent security proof of it~\cite{PhysRevA.84.042344}.
There, we directly compute the overall density matrix of Alice, Bob, and Eve for one particular two-way DQKD protocol in which Alice uses two operators for the encoding of her bit.
In this paper, we consider a different qubit-based two-way DQKD protocol in which Alice uses four encoding operators, and prove its security using the delayed PA idea.
We show that this particular protocol resembles 
the integration of the BB84 protocol and OTP.
Because of this, our analysis is significantly simplified
since the security of the DQKD protocol against general attacks will then directly derive from that of the BB84 protocol~\cite{Lo1999,Shor2000,Kraus2005,Renner2005,Renner2005c,Koashi2009comple} and OTP~\cite{Shannon1949}.
We simply rely on the security results of the latter.
Our proof idea is to convert the integrated scheme to the DQKD protocol through a series of equivalent protocols.
We remark that the idea of integrating QKD with OTP has been proposed before by Deng and Long~\cite{PhysRevA.69.052319} without a rigorous security analysis.
The scheme of Deng and Long runs in a batch-after-batch manner where a batch of qubits received by Alice on the BB84 channel is stored in quantum memory first before they are used as a batch for OTP encryption, in contrast to our scheme in which each qubit is returned to Bob immediately after reception by Alice.

The organization of the paper is as follows:
After reviewing some preliminaries in Sec.~\ref{sec-preliminaries},
we first prove the security of delayed privacy amplification in Sec.~\ref{sec-DPA}.
This will be an important tool that we will use in the conversion to the DQKD protocol.
The DQKD protocol is described in Sec.~\ref{sec-two-way-DQKD} and the detailed discussion of its security proof based on the conversion argument is explained in Sec.~\ref{sec-security-proof-DQKD}.
To begin the conversion, we outline the initial protocols of the conversion process in Sec.~\ref{sec-original-protocols}.
Then we discuss the conversion process in Sec.~\ref{sec-conversion}.
We conclude in Sec.~\ref{sec-conclusions}.

\section{Preliminaries}
\label{sec-preliminaries}

\subsection{Notations}
Bit strings are represented as vectors with elements in GF($2$), where GF($q$) is the Galois field with $q$ elements.
We use $\vec{k}$ to denote such a vector 
and $k[i]$ to denote the $i$th bit.
We define the projector function $P(\ket{\phi})\equiv\ket{\phi}\bra{\phi}$.
We denote the Pauli matrices by
\begin{eqnarray*}
I=
\begin{bmatrix}
1&0\\
0&1
\end{bmatrix},
\hspace{12pt}
X=
\begin{bmatrix}
0&1\\
1&0
\end{bmatrix},
\hspace{12pt}
Z=
\begin{bmatrix}
1&0\\
0&-1
\end{bmatrix},
\hspace{12pt}
Y=iXZ.
\end{eqnarray*}
and the corresponding eigenstates by $\{\ket{0_w},\ket{1_w}\}$ where $w=x,y,z$.

\subsection{Security measure}

We adopt the universal composability definition of security first proposed by Canetti~\cite{Canetti2001}.
This definition quantifies the security of a cryptographic primitive in terms of its deviation from the ideal functionality.
The notion of universal composability has been extended to the QKD setting~\cite{Ben-Or2004,BenOr:Security:05}.
\begin{definition}[\cite{Konig2007,Renner2005c,Renner_Thesis_05}]
\label{def-composability}
{\rm
A classical random variable $K$ (representing the key) drawn from the set $\mathcal{K}$ is said to be $\epsilon$-secure with respect to an eavesdropper holding a quantum system $E$ if
\begin{eqnarray}
\label{eqn-composability-def}
\frac{1}{2} \operatorname{Tr} | \rho_{KE} - \rho_{U} \otimes \rho_{E} | \leq \epsilon
\end{eqnarray}
where 
$\rho_{KE}=\sum_{k \in \mathcal{K}} P_K(k) \ket{k}\bra{k} \otimes \rho_{E|K=k}$ is the state of the systems $K$ and $E$, $P_K(k)$ is the probability of having $K=k$, 
$\rho_{U}=\sum_{k \in \mathcal{K}}  \ket{k}\bra{k} / |\mathcal{K}|$
represents an ideal key taking values uniformly over $\mathcal{K}$,
and $|\mathcal{K}|$ is the size of $\mathcal{K}$.
Here, $\operatorname{Tr} |A|=\sum_i |\lambda_i|$ where $\lambda_i$'s are the eigenvalues of $A$.
}
\end{definition}
QKD expands a shorter secret key to a longer one.
When one round of QKD that is $\epsilon_1$-secure expands on a key generated by a previous round of QKD that is $\epsilon_2$-secure, the composition of the two rounds is
$(\epsilon_1+\epsilon_2)$-secure~\cite{BenOr:Security:05}.

\subsection{Additive functions}

In this paper, we consider PA functions that are additive.
A function $f: \text{GF}(q)^N \rightarrow \text{GF}(q)^{N_\text{PA}}$ is said to be additive if 
$f(\vec{a}+ \vec{b})=f(\vec{a})+ f(\vec{b})$ for all $\vec{a},\vec{b}\in \text{GF}(q)^N$ and the addition operators are defined in the respective fields. 
For prime $q$, the function $f$ is additive if and only if it can be expressed in the form of matrix multiplication $f(\vec{a})=A\vec{a}$ where addition and multiplication are defined in $\text{GF}(q)$.
The ``if'' part of this statement is obvious.
To prove the ``only if'' part, note that
additivity implies linearity when $q$ is prime since every $\vec{a}\in \text{GF}(q)^N$ can be expressed as a pure summation of unweighted basis vectors of some basis (e.g., a weighting of $2$ is broken down into a sum of two terms as in $\vec{a}=2 \vec{v}=\vec{v}+\vec{v}$, with $\vec{v}$ being a basis vector) and $f(\vec{a})=\sum_{i=1}^N f(a[i] \vec{v}_i)=\sum_i a[i] f(\vec{v}_i)$, where $\vec{v}_i$'s are the basis vectors.
Thus, the columns of $A$ are $f(\vec{v}_i)$'s.

We work in $\text{GF}(2)$ throughout the paper and so $q=2$ and we can always consider PA functions of the form $f(\vec{a})=A\vec{a}$.
For $q=p^r$ being a power of a prime, 
we can also restrict additive $f$ to be of the same form in the following sense.
We may view $\text{GF}(q)$ as a vector space over its subfield $\text{GF}(p)$, since
any element of $\text{GF}(q)$ can be written in the form $\sum_{i=1}^r \gamma_i \beta_i$, where $\gamma_i \in \text{GF}(p)$, $\beta_i \in \text{GF}(q)$, and $\beta_i$ cannot be expressed in the form $\gamma \beta_j$ for $j \neq i$ and some $\gamma \in \text{GF}(p)$.
This means that an element of $\text{GF}(q)$ can be represented as a length-$r$ vector with elements $\gamma_i$.
Thus, when we express the PA function in the prime field so that $f: \text{GF}(p)^{rN} \rightarrow \text{GF}(p)^{rN_\text{PA}}$, the statement that $f$ is additive if and only if $f(\vec{a})=A\vec{a}$ for all $\vec{a}\in \text{GF}(p)^{rN}$ also holds.
Of course, this does not mean that $f$ can be expressed as $A\vec{a}$ for all $\vec{a}\in \text{GF}(p^r)^{N}$.

Note that the number of additive PA functions grows exponentially as $N$ increases.
This makes Eve's job to attack a key distribution scheme more difficult with larger $N$ as she has to make guesses on the PA function to be used by Alice and Bob in order to customize her attack.
Also, additivity is not a very strong constraint and 
additive functions are commonly used.
For example, in
Toeplitz matrix based PA~\cite{Mansour:Toeplitz:93,Krawczyk:Hash:94,Fung2010_Finite,Ma2011_Finite}, the PA function $f(\vec{a})=A\vec{a}$ is additive where $A$ is a Toeplitz matrix.

A property of an additive function $f$ is that any image of $f$ is a translation of the kernel by some offset.
This means that the number of elements in every image is the same.
We will this property in the proof of Theorem~\ref{thm-security-delayed-PA}.

\section{Delayed privacy amplification}
\label{sec-DPA}

\begin{figure}
\subfigure[BB84 with normal PA and OTP.]{
\includegraphics[width=1\columnwidth]{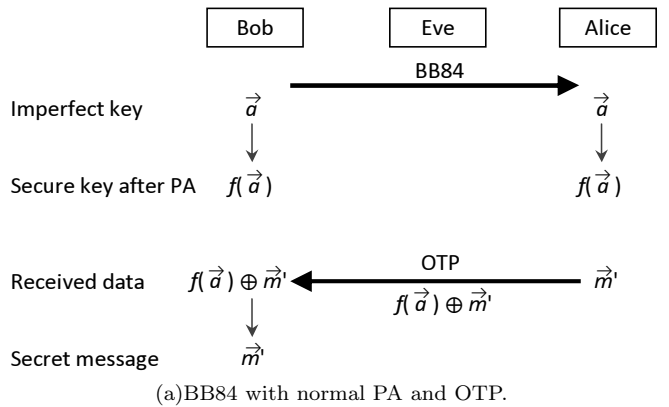}\label{fig:DPA_org}}
\\
\subfigure[BB84 with PA delayed after the OTP.  $\vec{m}$ can be regarded as the expanded version of the secret message $\vec{m}'$.]{
\includegraphics[width=1\columnwidth]{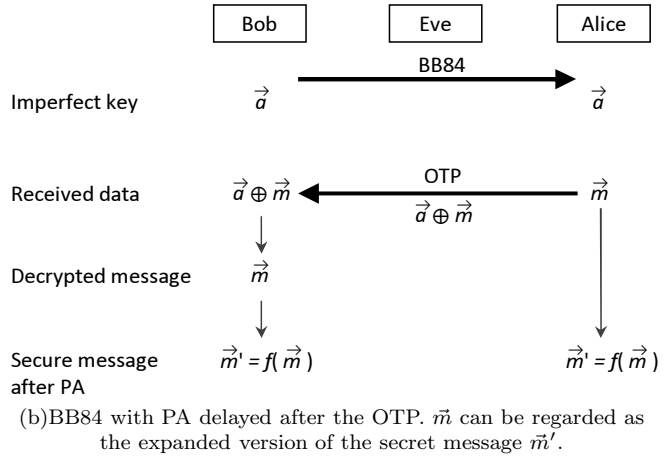}\label{fig:DPA_new}}
\caption{\label{fig:DPA}
Overview of delayed privacy amplification.
The message $\vec{m}'$ is secure in both situations, even though
Eve sees different strings in the OTP channel.
We show in Theorem~\ref{thm-security-delayed-PA} that security is not affected by whether Eve sees $f(\vec{a}) \oplus \vec{m}'$ or $\vec{a} \oplus \vec{m}$.
Here, $f$ is the PA function which shortens its input and the arrow of the BB84 channel indicates the direction of the qubits.
}
\end{figure}

Suppose Alice has an $N$-bit raw key $\vec{a}$ on which Eve has some information.
The raw key, which may not be completely secure, can be turned into a shorter secure final key by applying PA.
Bob initially holds an $N$-bit raw key $\vec{b}$ which is a noisy version of $\vec{a}$ and for the current discussion of delayed PA, we assume that Bob can correct all errors so that he also holds $\vec{a}$.
We denote the function chosen by Alice and Bob for PA as $f$ (mapping $N$ bits to $N_\text{PA}<N$ bits) and the secure key shared between Alice and Bob as $f(\vec{a})$.
Normally, to encrypt an $N_\text{PA}$-bit message $\vec{m}'$, Alice computes $f(\vec{a}) \oplus \vec{m}'$ and 
sends it to Bob.
Bob can recover the original message $\vec{m}'$ by XORing\footnote{Exclusive OR (XOR), denoted by $\oplus$, is an operation on two bits, such that $i \oplus j=0$ if both $i=j$ and $i \oplus j=1$ otherwise.  XOR can be extended to become an operation on two bit strings by XORing each bit pair independently.} the encrypted message with the shared key $f(\vec{a})$ (see Fig.~\ref{fig:DPA_org}).
Eve, in the middle of Alice and Bob, can see the encrypted message $f(\vec{a}) \oplus \vec{m}'$, but cannot get information on the original message $\vec{m}'$ because she does not know $f(\vec{a})$.

In a delayed PA scheme, 
Alice expands
the original message $\vec{m}'$ 
to $\vec{m}$ for encryption with the pre-PA key $\vec{a}$.
We call the expanded message $\vec{m}$ the PA-inverse of $\vec{m}'$.
To do this securely, as we show below, Alice should choose $\vec{m}$ (an $N$-bit string) uniformly among all strings that satisfy $\vec{m}'=f(\vec{m})$.
Alice then sends $\vec{a} \oplus \vec{m}$ to Bob (see Fig.~\ref{fig:DPA_new}).
We demand that $f$ be additive.
Thus, anyone who receives this string can apply $f$ to get $f(\vec{a} \oplus \vec{m})=f(\vec{a}) \oplus \vec{m}'$, which is the encrypted message sent in the normal OTP situation.
In particular, Bob can recover the original message by applying $f$ and XORing with the shared key $f(\vec{a})$.
Alternatively, Bob can recover the original message by XORing the received data with the pre-PA key $\vec{a}$ and applying $f$.
The security of the original message $\vec{m}'$ is not obvious, as Eve sees $\vec{a} \oplus \vec{m}$ in this delayed PA scheme, not $f(\vec{a} \oplus \vec{m})$ in the normal OTP scheme.
Nevertheless, we show in the following theorem that the security of the delayed PA scheme is identical to that of the normal OTP scheme.

\begin{theorem}[Security of delayed privacy amplification]
\label{thm-security-delayed-PA}
{\rm 
Given an additive function $f$ that maps an $N$-bit string to an $N_\text{PA}$-bit string and that
the final key $f(\vec{a})$ for some $\vec{a}$ is $\epsilon$-secure against Eve according to security Definition~\ref{def-composability},
then, for some $N_\text{PA}$-bit message $\vec{m}'$ chosen independently of $\vec{a}$,
$\vec{m}'$ is $\epsilon$-secure against Eve (i.e., with the same security level) when she sees $\vec{a} \oplus \vec{m}$, where
$\vec{m}$ is uniformly chosen among all strings that satisfy $\vec{m}'=f(\vec{m})$.
}
\end{theorem}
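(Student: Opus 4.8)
The plan is to reduce the delayed-PA scenario to the ordinary OTP scenario in two reversible steps, and then reduce the latter to the assumed $\epsilon$-security of the key $f(\vec{a})$. Throughout I take $\vec{m}'$ to be a uniform key on the image of $f$ (the composability setting, in which the ``message'' is itself a key from a previous round), and I write $K=f(\vec{a})$ with the assumed state $\rho_{KE}=\sum_k P_K(k)\ket{k}\bra{k}\otimes\rho_{E|k}$ satisfying $\tfrac12\operatorname{Tr}|\rho_{KE}-\rho_U\otimes\rho_E|\le\epsilon$. Let $C$ be Eve's classical register holding $\vec{a}\oplus\vec{m}$, let $\rho_{M'EC}$ be the joint state of message, Eve's quantum side information, and $C$, and let $\rho_U^{M'}$ denote the uniform state on the message register. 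The quantity I must bound is $\tfrac12\operatorname{Tr}|\rho_{M'EC}-\rho_U^{M'}\otimes\rho_{EC}|\le\epsilon$.

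The key step, and the main obstacle, is the first reduction, which must reconcile the fact that Eve literally sees more bits in the delayed scheme ($C$ is $N$ bits) than in the ordinary scheme ($\tilde{C}:=f(\vec{a})\oplus\vec{m}'$ is $N_{\text{PA}}$ bits), yet gains nothing. First I would invoke additivity: since $\vec{m}'=f(\vec{m})$ and $f$ is additive, the preimage $f^{-1}(\vec{m}')$ is a coset $\vec{m}_0\oplus\ker f$, and (as proved in the preliminaries) all such cosets have the same size $|\ker f|$. Because $\vec{m}$ is uniform over this coset, conditioned on $\vec{a}$ and $\vec{m}'$ the register $C=\vec{a}\oplus\vec{m}$ is uniform over $\vec{a}\oplus\vec{m}_0\oplus\ker f=f^{-1}(f(\vec{a})\oplus\vec{m}')=f^{-1}(\tilde{C})$. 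Hence the only dependence of $C$ on $\vec{a}$ is through $\tilde{C}=f(C)$; the rest is a uniform coset label independent of everything. Concretely, $C$ is obtained from $\tilde{C}$ by the classical channel $\mathcal{E}$ that replaces each value $\tilde{c}$ by the uniform mixture over $f^{-1}(\tilde{c})$, while $\tilde{C}$ is recovered from $C$ by the channel $\mathcal{D}$ that simply applies $f$. Both act only on Eve's registers (leaving $M'$ untouched) and both are completely positive and trace preserving. Since $\mathcal{E}$ maps the ordinary-OTP states to the delayed ones and $\mathcal{D}$ maps them back, monotonicity of the trace distance under CPTP maps applied in both directions forces the two security distances to coincide:
\[ \tfrac12\operatorname{Tr}|\rho_{M'EC}-\rho_U^{M'}\otimes\rho_{EC}|=\tfrac12\operatorname{Tr}|\rho_{M'\tilde{C}E}-\rho_U^{M'}\otimes\rho_{\tilde{C}E}|, \]
which is precisely the statement of Fig.~\ref{fig:DPA} that seeing $\vec{a}\oplus\vec{m}$ is no more informative than seeing $f(\vec{a})\oplus\vec{m}'$.

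The second reduction relates ordinary OTP to the key. Since $\tilde{C}=K\oplus\vec{m}'$, the relabeling $V:(\vec{m}',\tilde{c})\mapsto(\vec{m}',\tilde{c}\oplus\vec{m}')$ reconstructing $K$ is a permutation of basis states, hence a unitary that preserves trace distance. With $\vec{m}'$ uniform and independent of $(K,E)$, $V$ sends $\rho_{M'\tilde{C}E}$ to $\rho_U^{M'}\otimes\rho_{KE}$, and it sends the reference $\rho_U^{M'}\otimes\rho_{\tilde{C}E}=\rho_U^{M'}\otimes\rho_U^{\tilde{C}}\otimes\rho_E$ (a uniform OTP key makes the marginal ciphertext maximally mixed) to $\rho_U^{M'}\otimes\rho_U^{K}\otimes\rho_E$. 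The trace distance therefore factorizes,
\[ \tfrac12\operatorname{Tr}|\rho_{M'\tilde{C}E}-\rho_U^{M'}\otimes\rho_{\tilde{C}E}|=\tfrac12\operatorname{Tr}|\rho_{KE}-\rho_U^{K}\otimes\rho_E|\le\epsilon, \]
which, combined with the first reduction, proves the theorem.

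I expect the entire difficulty to live in the first reduction: the naive worry is that the extra $N-N_{\text{PA}}$ bits of $C$ leak information, and the resolution rests on the coset structure of an additive $f$ (equal-size fibers, plus recoverability of $\tilde{C}$ by applying $f$), which is exactly where the additivity hypothesis is used. A minor point I would state carefully is the regime of validity: I take $f$ surjective onto the key space and $\vec{m}'$ uniform, so that every $\tilde{c}$ has a nonempty preimage coset and the leftover coset randomness is genuinely uniform; these are the standard PA and composability assumptions.
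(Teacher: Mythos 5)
Your proof is correct, and its skeleton coincides with the paper's: reduce delayed PA to the ordinary OTP scenario via the coset structure of the additive $f$, then reduce ordinary OTP to the assumed $\epsilon$-security of $f(\vec{a})$. Where you genuinely differ is in execution. The paper's first step is verbal: Eve seeing $f(\vec{a})\oplus\vec{m}'$ is declared ``effectively the same'' as seeing the whole fiber $f^{-1}[f(\vec{a})\oplus\vec{m}']$, and seeing one uniformly random element of that fiber (selected by randomness $v$ independent of the fiber, which is where equal fiber sizes are invoked) is declared equivalent to seeing the fiber together with $v$; the paper never manipulates the trace-distance quantity of Definition~\ref{def-composability} inside the proof. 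You replace these view-equivalences with a two-way data-processing argument: the classical channels $\mathcal{E}$ (spread $\tilde{c}$ uniformly over $f^{-1}[\tilde{c}]$) and $\mathcal{D}$ (apply $f$) act only on Eve's ciphertext register and map each scenario's joint state, and its ideal reference state, onto the other's, so monotonicity of trace distance applied in both directions yields exact equality of the two composable security parameters. The essential use of additivity is the same in both proofs --- it gives $\vec{a}\oplus f^{-1}[\vec{m}']=f^{-1}[f(\vec{a})\oplus\vec{m}']$, so the conditional law of $\vec{a}\oplus\vec{m}$ factors through $f(\vec{a})\oplus\vec{m}'$ --- and your channel $\mathcal{E}$ is precisely the paper's ``random element chosen according to $v$'' promoted to a CPTP map. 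For the second step the paper cites OTP security as a black box, whereas you prove it, via the relabeling unitary $V$ and invariance of trace distance under unitaries and under tensoring with $\rho_U^{M'}$. Your rendering buys rigor (the conclusion is an equality of trace distances rather than an informal equivalence of Eve's views) and it surfaces two hypotheses the paper leaves implicit: $\vec{m}'$ must be uniform for Definition~\ref{def-composability} to be the meaningful security criterion for it, and $f$ must be surjective so that every fiber is nonempty. The paper's rendering buys brevity and intuition. I see no gap in your argument.
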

\begin{proof}
Due to the security of OTP,
$\vec{m}'$ is secure when Eve sees $f(\vec{a})\oplus \vec{m}'$.
Starting with this condition, we convert it to the final condition claimed in the theorem.
Let $f^{-1}[\vec{\alpha}]$ be the inverse image of $\vec{\alpha}$ under $f$.
Thus, Eve seeing 
$f(\vec{a})\oplus \vec{m}'$
is effectively the same as Eve seeing 
$f^{-1}[f(\vec{a})\oplus \vec{m}']$ 
since Eve knows $f$ and thus can compute the former given the latter and vice versa.
First note that 
$f^{-1}[\vec{\alpha}]$
has 
$2^{N-N_\text{PA}}$ elements
irrespective of $\vec{\alpha}$.
This is because
$f$ is additive and has the form $f(\vec{a})=A\vec{a}$ where $A$ is an $N_\text{PA} \times N$ matrix.
So, every inverse image is in fact an affine subspace that can be translated to the
kernel of $f$ by an offset.
Hence, all inverse images $f^{-1}[\vec{\alpha}]$ for all $\vec{\alpha}$ have the same number of elements\footnote
{Sec.~\ref{sec-computation-PA-inverse} describes the computation of $f^{-1}[\vec{\alpha}]$ and shows explicitly how to find the $2^{N-N_\text{PA}}$ elements for a given $\vec{\alpha}$.}.
This is important since this allows us to use
a random variable $v$ independent of $\vec{\alpha}$ 
to select an element in the set
$f^{-1}[\vec{\alpha}]$.
The variable $v$ has a fixed range and is drawn uniformly.

As the final part of the argument, note that 
giving Eve a random element of 
$f^{-1}[f(\vec{a})\oplus \vec{m}']$ 
chosen according to $v$
is equivalent to Eve seeing all the elements of 
$f^{-1}[f(\vec{a})\oplus \vec{m}']$ 
and $v$.
Since $v$ is independent of the elements of 
$f^{-1}[f(\vec{a})\oplus \vec{m}']$, 
knowing $v$ gives Eve
no extra information about 
$f(\vec{a})$ or $\vec{m}'$
over what knowing 
$f^{-1}[f(\vec{a})\oplus \vec{m}']$ 
gives.
Thus, from Eve's point of view, seeing a random element of the set $f^{-1}[f(\vec{a})\oplus \vec{m}']$ is equivalent to seeing the whole set.
Choosing $\vec{s}$ uniformly in 
$f^{-1}[f(\vec{a})\oplus \vec{m}']$ 
means choosing $\vec{s}$ uniformly such that $f(\vec{s})=f(\vec{a}) \oplus \vec{m}'$ or $f(\vec{s}\oplus\vec{a})=\vec{m}'$.
By defining $\vec{m}=\vec{s}\oplus\vec{a}$, we arrive at the claim of the theorem.
\end{proof}

\begin{remark}
{
\rm
We note that
delaying privacy amplification 
does not affect the security and the key generation length.
The reason is as follows.
Theorem~\ref{thm-security-delayed-PA} proves that the same security level is achieved by delaying PA with the same PA function.
Since the PA function defines the final key length, the key generation length is not affected.
}
\end{remark}

\subsection{Special messages}
We note the following special cases:
\begin{itemize}
\item
(Random message) If the original message $\vec{m}'$ is also uniformly chosen (acting as a key), $\vec{m}$ can be uniformly chosen without regard to the condition $\vec{m}'=f(\vec{m})$.

\item
(Imperfect key as message) If the original message $\vec{m}'$ is an imperfect key, we can delay the PA of it together with $\vec{a}$.
For instance, suppose that $\vec{m}'=g(\vec{a'})$ is secure after applying the PA function $g$ to the insecure key $\vec{a'}$.
Then, the encrypting party can send 
$\vec{a} \oplus \vec{m}$, where
$\vec{m}$ is uniformly chosen among all strings that satisfy $g(\vec{a'})=f(\vec{m})$.
\end{itemize}

\subsection{Computation of the PA-inverse}
\label{sec-computation-PA-inverse}
To apply the delayed PA scheme, given the original message $\vec{m}'$, Alice needs to compute its inverse by choosing
$\vec{m}$ uniformly among all strings that satisfy $\vec{m}'=f(\vec{m})$.
Here, we offer a method that Alice can use to compute the PA-inverse $\vec{m}$.
Note that this is one possible method, there may be other methods with different efficiencies to perform the same task.

Our method goes as follows.
Since $f$ is imposed to be additive, it can be represented by a matrix multiplication in $\text{GF}(2)$, the finite field of two elements:
\begin{eqnarray}
\label{eqn-PA-Am}
\vec{m}'&=&f(\vec{m})
=
A \vec{m}
\end{eqnarray}
where $A$ is an $N_\text{PA} \times N$ matrix with entries in 
$\text{GF}(2)$.
Multiplication of two elements is AND\footnote{AND is an operation on two bits such that $i$ AND $j=1$ only when $i=j=1$.}, while addition is XOR.

We assume that the rows of $A$ are linearly independent.
Thus, we can apply row operations (XOR of two rows) based on Gaussian elimination to express $A$ in upper triangular form:
\begin{eqnarray}
A&=&
R
\begin{bmatrix}
\ast & \ast & \cdots & \cdots & \cdots & \cdots &\ast 
\\
0 & \ast &  \cdots & \cdots & \cdots & \cdots &\ast 
\\
& & \ddots
\\
0& \cdots & 0 & \ast & \cdots & \cdots & \ast 
\\
0& \cdots & 0 & 0 & \ast & \cdots & \ast 
\end{bmatrix}
\end{eqnarray}
where the last row has $N_\text{PA}-1$ zeros at the beginning and $R$ is an $N_\text{PA} \times N_\text{PA}$ matrix representing the row operations with $RR=I$.
Thus, given $R \vec{m}'$, we can find $\vec{m}$ by randomly choosing the last $N-N_\text{PA}$ elements of $\vec{m}$ and successively determining the remaining elements of $\vec{m}$ by using the triangular structure.

\subsection{Example usage: a simple relay}
Suppose Bob and Charlie want to establish a secret key, but they do not have a direct quantum link with each other.
Instead, Bob has a quantum link with Alice (a relay) who has already shared a huge supply (denoted as pool $\mathcal{P}$) of perfectly secure\footnote{We assume for simplicity that the perfectly secure key is established by face-to-face key exchange.} key bits with Charlie.
Normally,
Alice and Bob would run BB84 to generate from an $N$-bit raw key $\vec{a}$ an $N_\text{PA}<N$-bit final key $f(\vec{a})$.
With this key,
Alice can OTP-encrypt 
$N_\text{PA}$ bits from pool $\mathcal{P}$ 
and send the cipher text $f(\vec{a})\oplus \vec{m}'$ to Bob, where $\vec{m}'$ denotes the bits from the pool.
Bob then recovers the confidential message $\vec{m}'$ that Charlie knows and this completes the task of sharing a key between Bob and Charlie through Alice.
If the key $f(\vec{a})$ is $\epsilon$-secure against Eve 
according to the universally composable definition in Definition~\ref{def-composability},
the originally perfectly secure key $\vec{m}'$ now becomes $\epsilon$-secure.

Suppose that 
the costs of classical communication, computation, and/or energy of Alice are high.
PA can be costly in these aspects.
In particular, performing PA requires one party to transmit the full specification of the PA function $f(\cdot)$ to the other party.
For example, Toeplitz matrix based PA needs $N+N_\text{PA}-1$ bits to specify~\cite{Mansour:Toeplitz:93,Krawczyk:Hash:94,Fung2010_Finite,Ma2011_Finite}, which can be a big number when the block size is large.
Also, performing Toeplitz matrix based PA requires large matrix multiplication, which translates to large computation and energy needs.  These can be costly for a satellite relay, for example.
In order to reduce these costs, Alice and Bob can delay PA and turn it over to Bob and Charlie.
To illustrate the idea of delayed PA, we assume for simplicity that bit and phase error testing and error correction are performed between Alice and Bob as normal.
Based on the phase error rate, Bob as in the normal situation decides a particular PA function $f$.
But instead of telling Alice about $f$, Bob tells Charlie about it.
Now, 
with delayed PA, Alice can take $N$ bits from pool $\mathcal{P}$ and directly OTP-encrypt them with the raw key $\vec{a}$.
This $N$-bit cipher text $\vec{a}\oplus \vec{m}$ is transmitted to Bob, where $\vec{m}$ denotes the $N$ bits from the pool.
Bob recovers $\vec{m}$, which Charlie knows.
Both Bob and Charlie 
apply PA $f$ to share a final secret key $f(\vec{m})$, which has length $N_\text{PA}$ and is $\epsilon$-secure according to Theorem~\ref{thm-security-delayed-PA}.
This generates the same key as in the normal situation without delayed PA.
Note that in this example, we sacrifice more key bits between Alice and Charlie to save the communication, computation, and/or energy costs of Alice.

\section{Two-way deterministic QKD protocol}
\label{sec-two-way-DQKD}

\begin{figure}[t]
\includegraphics
[width=1\columnwidth]
{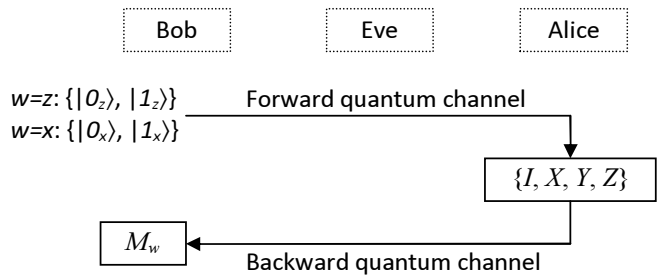}
\caption{\label{fig:DQKD_overview}
Protocol DQKD.
Bob chooses one of the four qubit states to send to Alice on the forward channel.
She applies one of the four operations to encode her key bit and returns the qubit to Bob on the backward channel.
Bob measures the received qubit in the basis he originally used for the forward qubit.
The value of Alice's key bit depends on Bob's basis (see Table~\ref{table-encoded-value}).
}
\end{figure}%

Fig.~\ref{fig:DQKD_overview} illustrates the two-way deterministic QKD protocol we consider in this paper, which we call Protocol DQKD.
The steps of Protocol DQKD are as follows.
Note that here and in the rest of paper, we 
present protocols in the context of
Koashi's security analysis~\cite{Koashi2009comple}
in which pre-shared secret keys are used for encrypted communications of error correction information.
However, paradigms of other security analyses~\cite{Lo1999,Shor2000,Kraus2005,Renner2005} are applicable as well.

\begin{enumerate}
\item (Qubit transmission) 
Bob sends 
qubits to Alice taken in $\{\ket{0_z},\ket{1_z},\ket{0_x},\ket{1_x}\}$ on line B-to-A.

\item (Encoding)
For each qubit received by Alice,
she either measures it with a random basis (check mode) or applies a random operation to it before returning it to Bob via line A-to-B (encoding mode).
We call the qubit in the check mode a test bit and in the encoding mode a code bit.
The operation she applies in the encoding mode is $I$, $X$, $Y$, or $Z$ chosen with uniform probabilities.
It does not matter whether she returns a qubit to Bob via line A-to-B in the check mode.

\item (Measurement by Bob)
Bob measures each qubit received on line A-to-B in the same basis as the one he used for the state he sent to Alice on line B-to-A in Step 1.

\item (Channel estimation)
After transmission of all qubits, Alice and Bob estimate the bit error rate $e_b$ and phase error rate $e_p$ of line B-to-A using the 
test bits measured in check mode in Step 2.
They can do this by comparing their bit values of those qubits that Alice measured with consistent bases.

\begin{table}
\begin{tabular}{|c|c|c|}
\hline
& \multicolumn{2}{c|}{Bit value} \\
\cline{2-3}
Basis & 0 & 1\\
\hline
$x$ & $\{I,X\}$ & $\{Z,Y\}$
\\
\hline
$z$ & $\{I,Z\}$ & $\{X,Y\}$
\\
\hline
\end{tabular}
\caption{Key bit value dependence on the basis used by Bob ($x$ or $z$) and Alice's encoding operation ($I$, $X$, $Y$, or $Z$).
For example, when Bob uses basis $z$, bit $1$ is encoded by Alice if she applies $X$ or $Y$ on the qubit sent by Bob.
\label{table-encoded-value}
}
\end{table}

\item (Key reconciliation)
Bob announces to Alice the basis used for each 
code bit.
Alice constructs her key bit value based on the basis (see Table~\ref{table-encoded-value}):
when the basis is $x$, the key bit is $0$ ($1$) if she applied $I$ or $X$ ($Z$ or $Y$);
when the basis is $z$, the key bit is $0$ ($1$) if she applied $I$ or $Z$ ($X$ or $Y$).
Bob uses the same rule to decide the key bit value.
Note that Alice and Bob do not discard any 
code bit.
There is no basis reconciliation step.

\item (Key bit error testing)
Alice and Bob test for the error rate $e_b^\rightleftarrows$ in the key bits by comparing a subset of them.
The remaining key bits form their raw keys, $\vec{a}$ for Alice and $\vec{b}$ for Bob.
We denote the length of them by $N$.

\item (Final key generation)
Alice and Bob choose a 
privacy amplification function $f(\cdot)$ that is additive and 
maps $N$ bits to $N_\text{PA}=N[1-h(e_p)]$ bits.
Alice applies privacy amplification to her raw key $\vec{a}$ to obtain the final key $\vec{k}=f(\vec{a})$.
She sends Bob $N h(e_b^\rightleftarrows)$ bits of error correction information encrypted with pre-shared secret bits.
This allows Bob to correct his raw key $\vec{b}$ to match Alice's $\vec{a}$.
Bob then applies PA to get the same final key $\vec{k}$.

\end{enumerate}

The net key expansion length is
\begin{eqnarray}
\label{eqn-two-DQKD-key-rate}
N_\text{key,two-way}
&=&
N[
1-h(e_b^\rightleftarrows)-h(e_p)].
\end{eqnarray}
In Sec.~\ref{sec-conversion}, we will show that the newly generated key with length $N_\text{PA}$ 
is secure, thus the net key gain given in Eq.~\eqref{eqn-two-DQKD-key-rate} is achievable.
We will prove this by combining the BB84 protocol and the OTP protocol, and then successively converting the OTP protocol to finally form the two-way DQKD protocol given here.

An interesting feature of
two-way DQKD is that every code bit encoded by Alice in the encoding mode will be used for the final key generation without being wasted due to measurement basis mismatch.
There is no basis reconciliation for the key bits and
this is why the protocol is called {\it deterministic}%
\footnote{Note that the term deterministic was first introduced in Ref.~\cite{PhysRevLett.89.187902} to mean that when Alice wants to send 0 (or 1) to Bob, she can encode her bit definitely.
This makes sense in quantum direct communication, but not QKD.
We borrow this term to the QKD setting but only use it to mean that every code bit will be used to generate the final key, instead of that every code bit is the final key bit.
This is
because Alice and Bob need to run privacy amplification which is determined only after Alice has encoded all the raw key bits.
Privacy amplification will then turn Alice's raw key bits to a new bit string that is different from what she initially sent.
}.
This is in contrast to the original BB84 protocol
where half of the code bits are discarded.
On the other hand, the efficient BB84 protocol~\cite{Lo2005b} allows all code bits to be used as well, but only asymptotically.
Therefore, in finite-length situations, two-way DQKD is still more efficient in using the code bits.

Note that the test bits in the check mode of two-way DQKD are measured by Alice with a random basis and thus are subject to discarding due to basis mismatch.
Thus, the check mode performances are the same in two-way DQKD and BB84.

A disadvantage of two-way DQKD is that the quantum signals emitted by Bob suffer from twice the channel loss compared to BB84.

\section{Security proof of two-way DQKD}
\label{sec-security-proof-DQKD}

The security proof of the two-way DQKD protocol is based on arguing for the equivalence of the protocol with an integrated scheme, and thus the security of the former directly follows from that of the latter.
The integrated scheme consists of
the BB84 protocol on the forward line and one-time pad on the backward line.
The security of both are well established%
~\cite{Lo1999,Shor2000,Kraus2005,Renner2005,Renner2005c,Koashi2009comple,Shannon1949}.
Starting with the integrated scheme in Sec.~\ref{sec-original-protocols}, we will convert it to the two-way DQKD protocol in Sec.~\ref{sec-conversion}.

\subsection{Original Protocols for constructing two-way DQKD}
\label{sec-original-protocols}
Here, we outline the steps of the BB84 protocol and OTP, which serve as the starting point of the conversion process.

\subsection*{Protocol 1 on line B-to-A: BB84}

We can view the line from Bob to Alice as a BB84 key distillation step.
The steps of BB84 are shown below, where we assume for simplicity the use of quantum memory to avoid the step of discarding bits measured with inconsistent bases.

Protocol 1 (BB84 with quantum memory) on line B-to-A:

\begin{enumerate}
\item
Bob sends $N+N_\text{test}$ qubits to Alice taken in $\{\ket{0_z},\ket{1_z},\ket{0_x},\ket{1_x}\}$.

\item
Alice stores all $N+N_\text{test}$ received qubits in quantum memory.

\item
Bob announces the basis of each qubit, and
Alice measures her qubits in the corresponding bases.

\item
Alice and Bob randomly select $N_\text{test}$ test bits to find out the bit error rate $e_b$ and phase error rate $e_p$ for this line B-to-A.\footnote{
The average quantum bit and phase error rates $e_b$ and $e_p$
are related to
the classical bit error rates in the $x$ basis test bits and the $z$ basis test bits, denoted as $e_x$ and $e_z$ respectively.
Asymptotically, the quantum bit error rate and the phase error rate for the remaining $x$ ($z$) basis bits are $e_x$ and $e_z$ respectively ($e_z$ and $e_x$ respectively).
Thus, the average quantum bit error rate is $e_b=(e_x+e_z)/2$ and the average quantum phase error rate is $e_b=(e_z+e_x)/2$.
Even though they are the same, we use separate symbols for them to emphasize their meanings in secret key distillation.
}
Alice and Bob choose a 
privacy amplification function $f(\cdot)$ that is additive and 
maps $N$ bits to $N_\text{PA}=N(1-h(e_p))$ bits.

\item

The final secret key is derived from Alice's raw key as an $N_\text{PA}$-bit string $\vec{k}=f(\vec{a})$.
To allow Bob to obtain the same final key,
Alice sends Bob $N_\text{EC}=N h(e_b)$ bits of error correction information encrypted with pre-shared secret bits of the same size so that Bob can correct his raw key $\vec{b}$ to become $\vec{a}$.
He then applies the same privacy amplification function $f(\cdot)$ to get the final key $\vec{k}$.

\end{enumerate}

Now, according to Koashi's security proof of the BB84 protocol~\cite{Koashi2009comple} (see also other proofs~\cite{Lo1999,Shor2000,Kraus2005,Renner2005}), the final key $\vec{k}$ is secure against Eve with the 
the net key expansion length as 
\begin{eqnarray}
N_\text{key}
&=&
N_\text{PA}-N_\text{EC}
=
N[
1-h(e_b)-h(e_p)].
\end{eqnarray}

\subsection*{Protocol 2 on line A-to-B: one-time pad}
We can view the line from Alice to Bob as a one-time pad encryption step.
\begin{enumerate}
\item[6.]
Alice encrypts an $N_\text{PA}$-bit message $f(\vec{m})$ with the secret key $\vec{k}$ with one-time pad and sends the encrypted message $f(\vec{m}) \oplus \vec{k}$ to Bob over a classical channel.
(As we will see later, $\vec{m}$ will be chosen randomly with uniform probabilities.)
\item[7.]
Bob decrypts his received data with key $\vec{k}$ to get the secret message $f(\vec{m})$.
\end{enumerate}
Here, 
Eve sees $f(\vec{m}) \oplus \vec{k}$ on line A-to-B and the message $f(\vec{m})$ is secure against her because of the security of one-time pad~\cite{Shannon1949}.

\subsection{Conversion from original protocol}
\label{sec-conversion}

\begin{figure}
\subfigure[]{
\includegraphics[width=1\columnwidth]{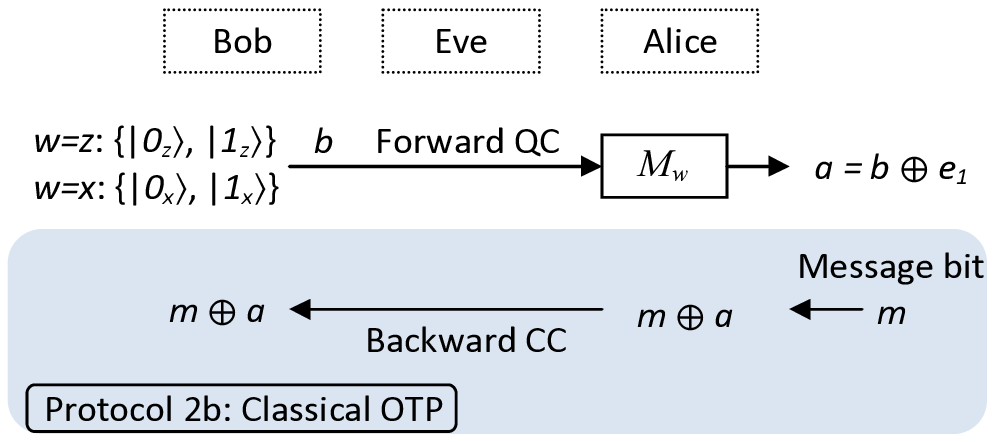}\label{fig:protocol2b}}
\\
\vspace{.2cm}
\subfigure[]{
\includegraphics[width=1\columnwidth]{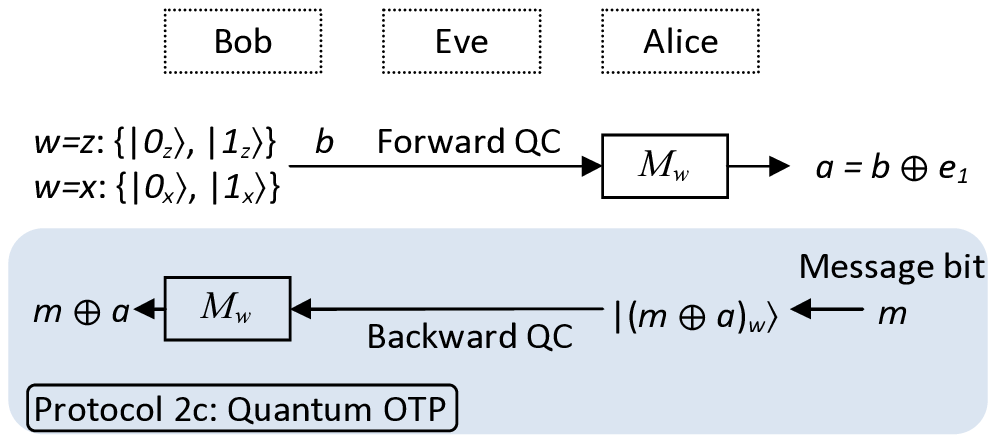}\label{fig:protocol2c}}
\\
\vspace{.2cm}
\subfigure[]{
\includegraphics[width=1\columnwidth]{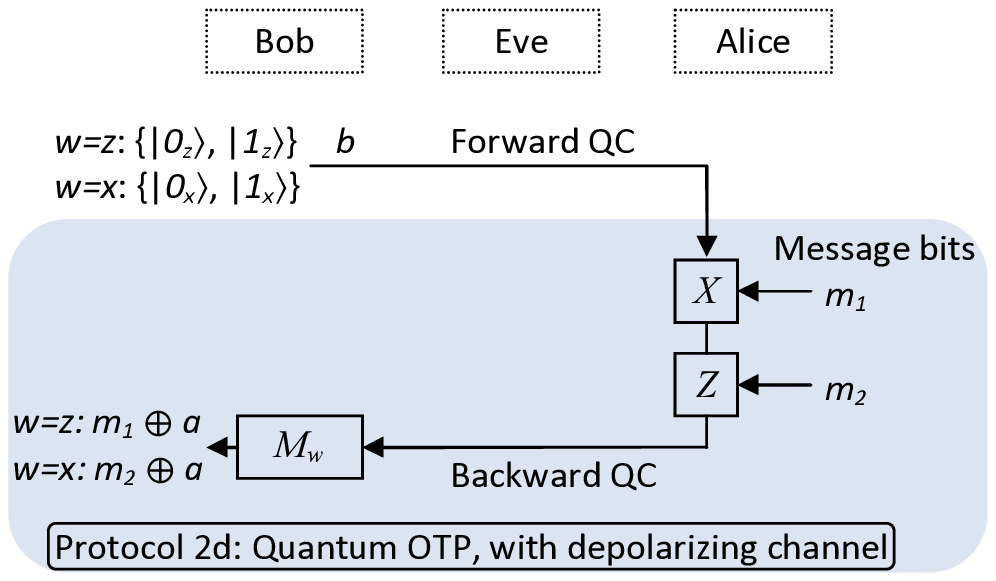}\label{fig:protocol2d}}
\caption{\label{fig:new-protocols}
(Color online)
Protocols 2b, 2c, and 2d are equivalent.
The equivalence between Protocols 2c and 2d can be understood intuitively by noting that both a measurement and a depolarizing operation disentangle line B-to-A and line A-to-B.
This is rigorously shown 
by comparing their density matrices in 
Eq.~\eqref{eqn-protocol-2c-final-state} of Protocol 2c and Eqs.~\eqref{eqn-protocol-2d-final-state-1}-\eqref{eqn-protocol-2d-final-state-2} of Protocol 2d.
We identify the message bit $m_1$ ($m_2$) of Protocol 2d with $m$ of Protocol 2c when $w=z$ ($w=x$).
Thus, Protocol 2d requires Bob to inform his basis $w$ to Alice so that she knows whether $m_1$ or $m_2$ is used for the final key generation.
In all cases, privacy amplification $f$ is delayed after OTP to generate the final secret key $f(\vec{m})$, and the top part of each figure is Protocol 1.
Here, we assume for simplicity that the backward line A-to-B is noiseless, but noise can be incorporated easily (see Sec.~\ref{subsec-key-rate}).
QC: quantum channel; CC: classical channel.
}
\end{figure}

We successively convert the original Protocol 2 to new Protocols 2b, 2c, and 2d, while maintaining the same security in each step to finally arrive at the two-way DQKD protocol.
Figure~\ref{fig:new-protocols} shows the equivalent protocols and they are described in more detail in the following.

\subsection*{
Protocol 2b on line A-to-B: one-time pad with delayed privacy amplification 
}
\begin{enumerate}
\item[6.]
Alice encrypts an $N$-bit random message $\vec{m}$ with her raw key $\vec{a}$ with one-time pad and sends the encrypted message $\vec{m} \oplus \vec{a}$ to Bob over a classical channel line A-to-B.
\item[7.]
Bob recovers the secret message $f(\vec{m})$ as follows.
He applies privacy amplification to his received bits to get $f(\vec{m} \oplus \vec{a})$.
Due to the additivity of $f(\cdot)$, his received data is $f(\vec{m}) \oplus f(\vec{a})=f(\vec{m}) \oplus \vec{k}$ which he can decrypt with the same key $\vec{k}$ to recover the message $f(\vec{m})$.
Alternatively, he can XOR his received string $\vec{m} \oplus \vec{a}$ with the raw key $\vec{a}$ and apply privacy amplification $f$ to recover the message $f(\vec{m})$.
\end{enumerate}
Here, Eve sees $\vec{m} \oplus \vec{a}$ and this is different from the situation in Protocol 2.
Nevertheless, as we have shown in Theorem~\ref{thm-security-delayed-PA}, the security of $f(\vec{m})$ is the same as that in Protocol 2, meaning that Eve cannot get any information about $f(\vec{m})$.

\subsection*{Protocol 2c on line A-to-B: one-time pad on quantum channel, with measurement}

Line A-to-B is now regarded as a quantum channel, even though
we use it for the communication of the classical OTP-encrypted message.
We encode the OTP-encrypted classical message in Step 6 of Protocol 2b $\vec{m} \oplus \vec{a}$ in a quantum state so that it can be carried by the quantum channel.
This is easily done by encoding each bit in the eigenstate of some basis.
Here, we assume that the basis used is the same basis $w=x,z$ Bob used to encode his qubit on line B-to-A.
Also, we assume for simplicity that Alice knows $w$ for each bit.
Thus, 
the $N$-qubit state Alice sends is
$\ket{(\vec{m} \oplus \vec{a})_{\vec{w}}}\equiv \bigotimes_{i} \ket{(m[i] \oplus a[i])_{w[i]}}$
where the index $i$ denotes the $i$th bit for the message, key, and basis.
The modified steps are as follows:
\begin{enumerate}
\item[6.]
Alice encrypts an $N$-bit random message $\vec{m}$ with the raw key $\vec{a}$ with one-time pad and sends the encrypted message $\ket{(\vec{m} \oplus \vec{a})_{\vec{w}}}$ to Bob over the quantum channel line A-to-B.
\item[7.]
Bob measures the $N$ qubits received from line A-to-B in basis $\vec{w}$ to recover the OTP-encrypted message $\vec{m} \oplus \vec{a}$.
He recovers the secret message $f(\vec{m})$ as in Step 7 of Protocol 2b.
\end{enumerate}

\subsubsection*{Overall density matrix}
We first consider the state $\ket{\Psi[i]}_{A\bar{A}}$ for the $i$th bit shared after Alice received her 
$N$ qubits 
from line B-to-A (where system $A$ is the $i$th qubit received by Alice on line B-to-A and system $\bar{A}$ includes all the remaining 
systems including Eve's and Bob's states for the $N$ transmissions and Alice's remaining $N-1$ qubits).
To simplify notation, we drop the index $i$ from all symbols (including $\ket{\Psi[i]}$, $w[i]$, and $a[i]$) in the following since we always deal with the $i$th qubit.
This state $\ket{\Psi}$ is the state before Alice decides to send anything on line A-to-B.
In Protocol~2c, 
Alice measures her state of $A$ in basis $w=x,z$ 
using the projection $\{\ket{0_w}\bra{0_w},\ket{1_w}\bra{1_w}\}$.
So the overall state becomes $\sum_{a=0,1} \ket{a_w}_A \ket{\Psi(a,w)}_{\bar{A}} \ket{a}_{A'}$ where 
$\ket{\Psi(a,w)}_{\bar{A}}\equiv \bra{a_w}_A \otimes I_{\bar{A}} \ket{\Psi}_{A\bar{A}}$
 and system $A'$ is the ancilla for storing the measurement result.
We specifically isolate the raw key bit $a$ in system $A'$ so that we can use it to perform OTP with the message bit $m$.
Next, Alice 
prepares a random message $2^{-1}\sum_{m=0,1} \ket{m}_{M}\bra{m}$ 
and runs controlled-$Z$ (if $w=x$) or controlled-$X$ (if $w=z$) on systems $M$ (as control) and $A$.
This is equivalent to the OTP encryption
resulting
in the overall density matrix
\begin{eqnarray*}
\rho_{MA\bar{A}A'}
&=&
\frac{1}{2}\sum_{m=0,1} \ket{m}_{M}\bra{m} \otimes 
\\
&&
P\left( \sum_{a=0,1} \ket{({m} \oplus {a})_w}_A \ket{\Psi(a,w)}_{\bar{A}} \ket{a}_{A'} \right),
\end{eqnarray*}
in which system $A$ is sent by Alice on line A-to-B to Bob and system $M$ is her message bit.

After the OTP encryption, the raw key bit $a$ is no longer needed.
Thus, we trace over system $A'$ to get the overall state
\begin{eqnarray}
\label{eqn-protocol-2c-final-state}
\rho_{MA\bar{A}}
&=&
\frac{1}{2}
\sum_{
\substack{
a=0,1\\
m=0,1}} 
P\left( \ket{m}_{M} \ket{({m} \oplus {a})_w}_A  \ket{\Psi(a,w)}_{\bar{A}}\right) .
\end{eqnarray}
Note that tracing over system $A'$ which contains the raw key bit does not mean giving the raw key bit to Eve.
Eve's state is contained in system $\bar{A}$.
The state in Eq.~\eqref{eqn-protocol-2c-final-state} is important for our discussion since it contains all the relevant systems in the protocol.
In fact, Protocol~2d in the next section will be shown to be equivalent to Protocol~2c here by showing that the corresponding states there are the same as Eq.~\eqref{eqn-protocol-2c-final-state}.

\subsection*{
Protocol 2d on line A-to-B: one-time pad on quantum channel, without measurement
}
\label{subsec-protocol-2d}

In the previous Protocol 2c, the measurement by Alice disentangles line B-to-A and line A-to-B.
Therefore, to come up with an equivalent protocol without a measurement, we need to reproduce this disentanglement feature and at the same time
achieve the same overall state in Eq.~\eqref{eqn-protocol-2c-final-state}.
One way to do this is by replacing the measurement by a depolarizing channel.
Starting with the same initial state $\ket{\Psi}_{A\bar{A}}$ for the $i$th bit as in the previous subsection, Alice performs randomly with uniform probabilities the operations, $I$, $X$, $Y$, or $Z$ on each of her $N$ qubits independently.
We express this random operation as Alice using a mixed state $4^{-1}\sum_{m_1,m_2=0,1} \ket{m_1 m_2}_{M_1 M_2} \bra{m_1 m_2}$ to control the four operations on system $A$:
\begin{eqnarray}
\ket{\Psi}_{A\bar{A}} \bra{\Psi}
& \rightarrow &
\frac{1}{4} \sum_{m_1,m_2=0,1} \ket{m_1 m_2}_{M_1 M_2}\bra{m_1 m_2} \otimes 
\nonumber
\\
&&
(X^{m_1} Z^{m_2})_A \ket{\Psi}_{A\bar{A}}\bra{\Psi} (Z^{m_2} X^{m_1} )_A
\label{eqn-protocol-2d-depolarize}
\end{eqnarray}
Here, we assume Alice holds the purification of this mixed state.
Tracing the right hand side over $M_1$ or $M_2$, simple calculations (see Appendix~\ref{app-trace-over-m1m2}) lead to
\begin{eqnarray}
\label{eqn-protocol-2d-final-state-1}
\rho_{M_1A\bar{A}}
&\negthinspace=&
\frac{1}{2}
\negthickspace\negthickspace
\sum_{
\substack{
a=0,1\\
m_1=0,1}} 
\negthickspace\negthickspace
P\left( \ket{m_1}_{M_1} \ket{({m_1} \oplus {a})_z}_A \ket{\Psi(a,z)}_{\bar{A}}\right)
\\
\label{eqn-protocol-2d-final-state-2}
\rho_{M_2A\bar{A}}
&\negthinspace=&
\frac{1}{2}
\negthickspace\negthickspace
\sum_{
\substack{
a=0,1\\
m_2=0,1}} 
\negthickspace\negthickspace
P\left( \ket{m_2}_{M_2} \ket{({m_2} \oplus {a})_x}_A \ket{\Psi(a,x)}_{\bar{A}}\right)
\end{eqnarray}
Note that Eqs.~\eqref{eqn-protocol-2d-final-state-1} and \eqref{eqn-protocol-2d-final-state-2} are expressed in terms of bases $z$ and $x$ respectively, regardless of the actual basis $w$ used by Bob to encode system $A$.

We argue that Protocol 2c and Protocol 2d are the same as follows.
Eqs.~\eqref{eqn-protocol-2d-final-state-1} and \eqref{eqn-protocol-2d-final-state-2} represent the final overall state of Protocol 2d and we compare them to that of Protocol 2c in Eq.~\eqref{eqn-protocol-2c-final-state}.
We can see that when $w=z$ ($w=x$), we can identify $m_1$ in Eq.~\eqref{eqn-protocol-2d-final-state-1} (Eq.~\eqref{eqn-protocol-2d-final-state-2}) with $m$ in Eq.~\eqref{eqn-protocol-2c-final-state}.
Thus, when $w=z$ ($w=x$), we can regard that Alice's message bit is in $m_1$ ($m_2$).
Once the basis $w$ is publicly announced by Bob, all of Alice, Bob, and Eve will know whether $m_1$ or $m_2$ will be used by Alice; 
in other words, they will know which of Eqs.~\eqref{eqn-protocol-2d-final-state-1} and \eqref{eqn-protocol-2d-final-state-2} describes the situation.
Therefore, Protocol 2c and Protocol 2d are the same from Eve's and Bob's points of view.

In Protocol 2d, we need a step for Bob to inform Alice about $w$ so that she knows whether $m_1$ or $m_2$ is her message bit.
Note that when Alice performs one of the four operations of the depolarizing channel, she does not know what the message bit value is (unless $m_1=m_2$).
After Bob receives his qubit, he announces to Alice his basis choice $w$
and only then does Alice know the value of her own message bit.

The modified steps of Protocol 2d are as follows:
\begin{enumerate}
\item[6.]
Alice chooses two $N$-bit random messages $\vec{m_1}$ and $\vec{m_2}$.
For each qubit received from line B-to-A, she applies $Z$ if $m_2=1$ and applies $X$ if $m_1=1$.\footnote{
Note that the order of applications of these two operations does not matter in light of Eq.~\eqref{eqn-protocol-2d-depolarize} as swapping the order contributes a factor of $-1$ twice.}
The qubit is then forwarded back to Bob via line A-to-B.
\item[7.]
Bob measures the $N$ qubits received from line A-to-B in basis $w$ which he has used in Step 1.
\item[8.]
Bob announces to Alice the basis for each qubit.
If the basis is $z$ ($x$), Alice's message bit is $m_1$ ($m_2$).
So in the previous step, Bob's measured qubit corresponds to Alice's OTP-encrypted message bit $m_1 \oplus a$ (for $w=z$) or $m_2 \oplus a$ (for $w=x$).
He recovers the secret message $f(\vec{m})$ as in Step 7 of Protocol 2b, with the appropriate substitution $m_1 \rightarrow m$ or $m_2 \rightarrow m$ for each bit.
Alice also constructs the secret message $f(\vec{m})$ with the same substitution.
\end{enumerate}

Therefore, when line A-to-B is noiseless, Alice and Bob will share the message bit $m_1$ (for $w=z$) or $m_2$ (for $w=x$).
When line A-to-B is noisy, we can add further error testing and error correction for $\vec{m}$, which we have omitted for simplicity of discussion.
Finally, we note that combining Protocol 1 and Protocol 2d essentially gives Protocol DQKD given in Sec.~\ref{sec-two-way-DQKD}.\footnote{The key bit error testing of Step 6 of Protocol DQKD is omitted in Protocol 2d for simplicity of discussion, but this step can easily be added without affecting the result.}
Thus, we have proved the security of Protocol DQKD.

We remark that it makes sense that Alice's message bit depends on the basis used by Bob.
Because when Bob initially sends, for example, a $z$-eigenstate to Alice via line B-to-A, only Alice's $X$ operation (controlled by $m_1$) will bit flip the state, and so $m_1$ should become her message bit.

\subsection{Key generation rates}
\label{subsec-key-rate}

In Protocols 2, 2b, 2c, and 2d, 
we assume that the final key is derived from applying privacy amplification to Alice's raw key: $\vec{k}=f(\vec{a})$.
Bob is responsible for correcting his raw key to match Alice's.
To ensure security, Alice's message $\vec{m}$ is shortened to $N_\text{PA}=N(1-h(e_p))$ bits of secure message $f(\vec{m})$, where $e_p$ is obtained from Step 4 of Protocol 1.
In the discussion so far, we have not considered errors on line A-to-B.
Errors on line B-to-A cause Alice's raw key to be different from Bob's raw key such that $\vec{b}=\vec{a}\oplus\vec{e_{1}}$, where $\vec{e_{1}}$ is the error pattern with an error rate of $e_b$ (cf. Step 4 of Protocol 1).
Errors on line A-to-B cause Bob to receive $\vec{m}\oplus\vec{a}\oplus\vec{e_2}$ in Step 6 of Protocol 2b, where $\vec{e_{2}}$ is the error pattern on this line and could be correlated with $\vec{e_{1}}$.
Thus, when Bob uses $\vec{b}$ to decrypt his message received on line A-to-B, he faces the error pattern $\vec{e_1}\oplus\vec{e_2}$, whose error rate we denote as $e_b^\rightleftarrows$.
To help Bob correct for this error pattern,
Alice sends to Bob error correction information encrypted with $N(e_b^\rightleftarrows)$ bits of the pre-shared secret key.
Therefore, the net key expansion length is
\begin{eqnarray}
N_\text{key,two-way}
&=&
N[
1-h(e_b^\rightleftarrows)-h(e_p)].
\end{eqnarray}
This represents the key generation rate for the integration of Protocol 1 and any of Protocols 2, 2b, 2c, and 2d.
As expected, this is the same formula for the two-way DQKD protocol given in Eq.~\eqref{eqn-two-DQKD-key-rate} .
As a special case, when the error rates on the two lines are both $e_b$,
the overall error rate $e_b^\rightleftarrows$ is upper-bounded by $2e_b$ since the errors on the two lines can be correlated.
Thus, the key generation rate in this case is $1-h(2e_b)-h(e_p)$, which is the same as that derived for another two-way DQKD protocol in Ref.~\cite{PhysRevA.84.042344} (see Sec.~III~F therein).

Note that as Alice and Bob are correcting the overall error pattern $\vec{e_1}\oplus\vec{e_2}$, they do not need to separately correct for the error $\vec{e_1}$ in their raw keys; i.e., they do not perform Step 5 of Protocol 1.
This is reflected in the original two-way DQKD protocol in Sec.~\ref{sec-two-way-DQKD}.

\section{Conclusions}
\label{sec-conclusions}

The central idea of our paper is delayed privacy amplification and we have proved  its security.
Delayed PA is useful for 
secret key sharing between nodes of a QKD network assisted by trusted relays, 
and for the security proof of a qubit-based two-way DQKD protocol.
We anticipate that delayed PA will have further uses in other applications, such as the security proof of two-way continuous-variable QKD~\cite{Pirandola2008}.

In this paper, we derived the qubit-based two-way DQKD protocol from an integration of the BB84 protocol and OTP,
with the condition that
PA is delayed after the one-time pad.
Because of our security proof of delayed PA, the original security of BB84 directly carries over to the DQKD protocol.
Thus, we have proved the security of the DQKD protocol against general attacks with qubit signals.
This illustrates the power of the delayed PA idea.

Security analysis of DQKD with 
multi-qubit signals and 
decoy states~\cite{Hwang2003,Lo2005,Wang2005a} is beyond the scope of the current paper and will be left for future work.
Also, using non-additive privacy amplification functions in delayed privacy amplification will be considered in the future.

\section*{Acknowledgments}%
We thank H.-K. Lo for enlightening discussions.
This work is supported in part by
RGC under Grant No. 700709P of the HKSAR Government,
NSERC, the CRC program, CIFAR, QuantumWorks,
and
NSFC under Grant No. 11074283.

\appendix

\section{
Derivation of
Eq.~\eqref{eqn-protocol-2d-final-state-1}
from Eq.~\eqref{eqn-protocol-2d-depolarize}}
\label{app-trace-over-m1m2}

We only derive 
Eq.~\eqref{eqn-protocol-2d-final-state-1}
from Eq.~\eqref{eqn-protocol-2d-depolarize}.
The derivation of 
Eq.~\eqref{eqn-protocol-2d-final-state-2} is similar.
First, we decompose $\ket{\Psi}_{A\bar{A}}=\sum_{i=0,1} \lambda_i \ket{i_z}_A \ket{e_i}_{\bar{A}}$ where $\ket{i_z}_A$ are the normalized eigenstates of basis $z$ and 
$\ket{e_i}_{\bar{A}}$ are normalized but not necessarily orthogonal.
We trace the state of Eq.~\eqref{eqn-protocol-2d-depolarize} over $M_2$:
\begin{widetext}
\begin{eqnarray*}
\rho_{M_1A\bar{A}}
&=&
\frac{1}{4} \sum_{m_1=0,1} \ket{m_1}_{M_1}\bra{m_1} \otimes 
\left[
P(
X^{m_1}_A \ket{\Psi}_{A\bar{A}}
)
+
P(
(X^{m_1} Z)_A \ket{\Psi}_{A\bar{A}}
)
\right]
\\
&=&
\frac{1}{4} \ket{0}_{M_1}\bra{0} \otimes 
\Big[
P\big(\lambda_0 \ket{0_z}_A \ket{e_0}_{\bar{A}}+
\lambda_1 \ket{1_z}_A \ket{e_1}_{\bar{A}}
\big)
+
P\big(\lambda_0 \ket{0_z}_A \ket{e_0}_{\bar{A}}-
\lambda_1 \ket{1_z}_A \ket{e_1}_{\bar{A}}
\big)
\Big]
+
\\
&&
\frac{1}{4} \ket{1}_{M_1}\bra{1} \otimes 
\Big[
P\big(\lambda_0 \ket{1_z}_A \ket{e_0}_{\bar{A}}+
\lambda_1 \ket{0_z}_A \ket{e_1}_{\bar{A}}
\big)
+
P\big(\lambda_0 \ket{1_z}_A \ket{e_0}_{\bar{A}}-
\lambda_1 \ket{0_z}_A \ket{e_1}_{\bar{A}}
\big)
\Big]
\\
&=&
\frac{|\lambda_0|^2}{2}
\Big[
\ket{0}_{M_1}\bra{0}
\otimes 
\ket{0_z}_A \bra{0_z}
+
\ket{1}_{M_1}\bra{1}
\otimes 
\ket{1_z}_A \bra{1_z}
\Big] \otimes
\ket{e_0}_{\bar{A}} \bra{e_0}
+
\\
&&
\frac{|\lambda_1|^2}{2}
\Big[
\ket{0}_{M_1}\bra{0}
\otimes 
\ket{1_z}_A \bra{1_z}
+
\ket{1}_{M_1}\bra{1}
\otimes 
\ket{0_z}_A \bra{0_z}
\Big] \otimes
\ket{e_1}_{\bar{A}} \bra{e_1}.
\end{eqnarray*}
\end{widetext}
The last equation is equal to Eq.~\eqref{eqn-protocol-2d-final-state-1} by noting that 
$\ket{\Psi(k,z)}_{\bar{A}}\equiv \bra{k_z}_A \otimes I_{\bar{A}} \ket{\Psi}_{A\bar{A}} = \lambda_k \ket{e_k}_{\bar{A}}$.

The derivation of 
Eq.~\eqref{eqn-protocol-2d-final-state-2} from Eq.~\eqref{eqn-protocol-2d-depolarize} can be done in a similar manner by 
decomposing $\ket{\Psi}_{A\bar{A}}$ in the $x$ basis as $\ket{\Psi}_{A\bar{A}}=\sum_{i=0,1} \lambda'_i \ket{i_x}_A \ket{e'_i}_{\bar{A}}$.

\bibliographystyle{apsrev4-1}

\bibliography{paperdb}

\end{document}